\newtheorem{thm}{Theorem}
\newtheorem{lemma}[thm]{Lemma}
\def \OPT  {\mbox{\rm OPT}}
\begin{document}
\title{The Geometry of Scheduling}
\author{
Nikhil Bansal
\thanks{IBM T. J. Watson Research Center,
Yorktown Heights, NY 10598 USA.
E-mail: nikhil@us.ibm.com}
\and
 Kirk Pruhs 
 \thanks{
Computer Science Department, 
University of Pittsburgh, 
Pittsburgh, PA 15260 USA.
Email: kirk@cs.pitt.edu.
Supported in part by
NSF grants CNS-0325353,  IIS-0534531, and CCF-0830558, and
an IBM Faculty Award.} 
} 
\date{}
\maketitle

\begin{abstract}
We consider the following general scheduling problem: 
The input consists of $n$ jobs, each with an arbitrary release time,
size, and a monotone function 
specifying the cost incurred when the job is completed at a particular time.
The objective is to find a preemptive schedule of minimum aggregate cost.
This problem formulation is general enough to include many natural scheduling 
objectives, such as weighted flow, weighted tardiness, and sum of flow squared.

The main contribution of this paper 
is a randomized polynomial-time algorithm with an approximation ratio 
$O(\log \log nP )$,
where $P$ is the maximum job size.
We also give an $O(1)$ approximation in the special case when all jobs have identical release times.
Initially, we show how to reduce this scheduling problem
to a particular geometric set-cover problem.
We then consider a natural linear programming formulation of this geometric set-cover problem, 
strengthened by adding knapsack cover inequalities,
and show that rounding the solution of this linear program can be reduced to 
other particular geometric set-cover problems.
We then develop algorithms for these sub-problems using the local ratio technique,
and Varadarajan's quasi-uniform sampling technique.

This general algorithmic approach improves the best known approximation ratios by at least an 
exponential factor (and much more in some cases) for essentially 
{\em all} of the nontrivial common special cases of this problem.
We believe that this geometric interpretation of scheduling 
is of independent interest.
\end{abstract}


\section{Introduction}

We consider the following general offline scheduling problem:

\medskip
{\em General Scheduling Problem (GSP):}
The input consists of a collection of $n$ jobs, and for each job $j$
a positive integer release time $r_j$,
a positive integer size $p_j$, 
and a cost or weight function $w_j(t) \geq 0$ for each $t > r_j$ 
(we are purposely not precise about how these weight functions are represented
in the input). 
Jobs are to be scheduled preemptively on
one processor after their release times.  If job $j$ completes at time $t$, then a
cost of $\sum_{s=r_j +1}^t w_j(t) $ is incurred. The scheduling objective is to minimize
the total cost, $\sum_{j=1}^n \sum_{s=r_j +1}^{C_j} w_j(t) $, where $C_j$ is the completion
time of job $j$.

This general problem generalizes several natural scheduling problems, for example:

\medskip
{\em Weighted Flow Time:}
If $w_j(t)  = w_j$, where $w_j$ is some fixed 
weight associated with job $j$, then the objective is weighted flow time.

\smallskip

{\em Flow Time Squared:}
If $w_j(t) = 2(t-r_j) -1$, then the objective is the sum of the squares of the flow times. 

\smallskip

{\em Weighted Tardiness:}
If $w_j(t) = 0$ for $t$ not greater than some deadline $d_j$,
and $w_j(t) = w_j$ for $t$ greater than $d_j$,
then the objective is weighted tardiness.
\medskip

In general, this problem formulation can model any cost objective function that is the 
sum of arbitrary cost functions for individual jobs, provided these cost functions are non-decreasing, i.e. it cannot hurt to finish a job earlier.

Flow time, which is the duration of time $C_j - r_j$ 
that a job is in the system,
is clearly the most natural and most commonly used quality of service measure for a job
in the computer systems literature.
Many commonly-used and commonly-studied scheduling objectives are based on combining
the flow times of the individual jobs.
However, flow time is also considered a rather difficult measure to work with mathematically. 
One reason for this is that even slight perturbations to the instance, can lead to
lead to large changes in the optimum value. 
Despite much interest, large gaps remain in our understanding for even basic 
flow time based scheduling objectives.
For example, for weighted flow time, the best known approximation ratios achievable by
polynomial-time algorithms are essentially no better than the poly-logarithmic 
competitive ratios achievable by online algorithms. 
For weighted tardiness, and flow time squared, no nontrivial approximation ratios were 
previously known
to be achievable.
While in contrast, for all of these three problems,
even the possibility of a polynomial time approximation scheme (PTAS) 
has not been ruled out.
We discuss the related previous work further in Section \ref{s:related}.


\subsection{Our Results}
The main contribution of this paper is the design and analysis of
a randomized $O(\log \log nP)$-approximation
algorithm for GSP,  where $P$ is the maximum job size. 
In the special case when all the release times are 0, we
obtain an  $O(1)$-approximation algorithm.
Let $W = \max_{j,t} w_j(t) $ be the maximum value attained by any weight function.
The running time of our algorithm is polynomial in $n$, $\log P$ and $\log W$,
provided that we can in polynomial time determine the times when a weight
function doubles. 
This is polynomial in the input size if the input 
must contain an explicit representation of the largest possible weight.


The primary insight to obtain these results is to view the scheduling problem geometrically.
The initial step is to show that GSP can be
reduced 
(with only a constant factor loss in the approximation ratio)
to the following geometric set-cover problem that we call R2C:

\bigskip

{\em Definition of the R2C Problem:}
The input consists of a collection of ${\mathcal P}$ points in two dimensional space,
and for each point $p \in {\mathcal P}$ an associated positive integer demand $d_p$.
Each point $p\in {\mathcal P}$ is specified by its coordinates $(x_p,y_p)$.
Further the input contains a collection ${\mathcal R}$ of axis-parallel rectangles, each of them abutting 
the $y$-axis. That is, each rectangle $r\in {\mathcal R}$ has the form  $(0,x_r) \times (y_r^1,y_r^2)$. 
In addition, each rectangle $r \in {\mathcal R}$ has an associated positive integer capacity $c_r$ 
and positive integer weight $w_r$.
The goal is to find a minimum weight subset $S \subset {\mathcal R}$ of rectangles, 
such that for each point $p \in {\mathcal P}$, the total capacity of rectangles covering $p$ is at least $d_p$,
that is, $ \sum_{r \in {\mathcal R} : p \in {\mathcal R}} c_r \geq d_p$.

\bigskip
As we shall see later, job sizes will be mapped to rectangle capacities in our reduction,
so we will also use $P$ to denote the largest capacity of any rectangle.
Our algorithm for R2C starts with the natural linear programming (LP) 
relaxation of the problem, 
strengthened by adding the so-called knapsack cover inequalities.
To round this LP solution, our algorithm then proceeds in a way that is by now standard (see for example \cite{CGK10}) in the applications of knapsack cover inequalities. In the terminology of \cite{CGK10}, we reduce the problem to rounding an LP solution for the so-called {\em priority} set cover version of the problem and in addition
several set multi-cover problems. These resulting problems are simpler as they are uncapacitated.

In particular we proceed as follows.
The algorithm first picks rectangles that are selected by
the LP solution to a significant extent (i.e. $x_r \geq \beta$, for some fixed constant $\beta$),
and then considers the {\em residual} solution.
The knapsack cover inequalities guarantee that remaining LP variables for a feasible solution 
to the residual instance. Since all variables $x_r \leq \beta$ in this solution,
the capacities and demands can be rounded to powers of $2$, and the variables can be scaled by
a constant factor, so that each point's demand is covered several times over. 

Points are then classified as heavy or light depending on whether or not
the optimal LP solution extensively covers the point with rectangles whose capacity
is larger than the demand of the point. 
We reduce the problem of covering the heavy points by rectangles with higher capacity to the
geometric cover problem R3U defined below. 
We show that the instances of R3U that we obtain 
have boundaries with low union complexity.
In particular, the boundary of the union of any $k$ objects 
has a complexity of $O(k \log P)$.
Using Varadarajan's quasi-uniform sampling technique~\cite{Varadarajan09} 
for approximating weighted set cover on geometric instances with low union 
complexity, one can obtain a covering that is  an
$O(\log \log nP)$-approximation to fractional cover specified by the
LP solution.

\medskip

{\em Definition of the R3U Problem:}
The input consists of a collection of ${\mathcal P}$ points in three dimensional space.
Each point $p\in {\mathcal P}$ is specified by its coordinates $(x_p,y_p, z_p)$.
Further the input contains a collection ${\mathcal R}$ of axis-parallel right cuboids
each of them abutting the $xy$ and $yz$ coordinate planes. 
That is, each right cuboid $r\in {\mathcal R}$ has the form  
$(0,x_r) \times (y_r^1,y_r^2) \times (0, z_r)$. 
In addition, each right cuboid $r \in {\mathcal R}$ has an associated positive 
integer weight $w_r$.
The goal is to find a minimum weight subset $S \subset {\mathcal R}$ of cuboids
such that each point $p \in {\mathcal P}$ is covered by at least one cuboid.

\medskip

We reduce the problem of covering the light points to $\log P$
different instances, one for each possible job size, 
of the weighted geometric multi-cover problem R2M defined below. 
We then show how to use the local ratio technique to obtain a solution
for each instance of R2M that is $O(\log \log nP)$-approximate with the cost in the
optimal LP solution for jobs of this size. Combining these solutions for various sizes
implies a solution for covering all light points with cost $O(\log \log nP)$ times the LP cost.

\medskip

{\em Definition of the R2M Problem:}
The input consists of a collection of ${\mathcal P}$ points in two dimensional space,
and for each point $p \in {\mathcal P}$ an associated positive integer demand $d_p$.
Each point $p\in {\mathcal P}$ is specified by its coordinates $(x_p,y_p)$.
Further the input contains a collection ${\mathcal R}$ of axis-parallel rectangles, 
each of them abutting the $y$-axis. 
That is, each rectangle $r\in {\mathcal R}$ has the form  $(0,x_r) \times (y_r^1,y_r^2)$. 
In addition, each rectangle $r \in {\mathcal R}$ has an associated 
positive integer weight $w_r$.
The goal is to find a minimum weight subset $S \subset {\mathcal R}$ of rectangles, 
such that for each point $p \in {\mathcal P}$, the number of 
rectangles covering $p$ is at least $d_p$.

\subsection{ Identical Release Times}
In the instances of R2C that arise from our reduction from the 
general scheduling problem, in the special case of identical release times,
all the points lie on a line, and the rectangles are one-dimensional intervals.
This is precisely the generalized caching problem, for which a  polynomial-time
4-approximation algorithm is known~\cite{BBF} (see also~\cite{CGK10}, for a somewhat more systematic approach to it). 
Thus we conclude that there is a polynomial-time $O(1)$-approximation 
algorithm for GSP
when all release times are identical.

\subsection{ Related Results}
\label{s:related}
Let us first consider weighted flow time.
\cite{BansalD07}
gives an online algorithm that is 
$O(\log W)$-competitive, and a semi-online algorithm 
(which means that the parameters $P$ and $W$ must be known
a priori to the online algorithm) that 
is $O(\log nP)$-competitive. \cite{ChekuriKZ01} gives a semi-online algorithm that is $O(\log^2 P)$-competitive.
These online algorithms also give the best known approximation ratios 
for polynomial time algorithms.
\cite{ChekuriK02}
gives a $(1+\epsilon)$-approximation algorithm that 
 has running time $n^{O((\log P \log W)/\epsilon^3)}$. 
Thus, this gives a quasi-polynomial time approximation
scheme (QPTAS) when both $P$ and $W$ are polynomially bounded in $n$. 
Moreover, \cite{ChekuriK02} also gives a QPTAS for the case when only one of either $P$ or
$W$ is polynomially bounded in $n$. 
In the special case that the weights are the reciprocal of the job sizes, and hence the objective is
average stretch/slow-down, then there is a polynomial time approximation
scheme~\cite{BenderMR04,ChekuriK02}.

It is also known that the algorithm highest density first is 
$(1+\epsilon)$-speed $O(1)$-competitive for weighted flow~\cite{BecchettiLMP06}
and flow squared~\cite{BansalP03}. No other approximation guarantees
are known for flow squared.
An $n-1$-approximation algorithm is known for weighted tardiness if
all jobs are released at the same time~\cite{Cheng2005}, and nothing seems to be known for arbitrary release dates.
PTAS's are known with the additional restriction that there are only a constant
number of deadlines~\cite{KarakostasKW09} or if jobs  have unit 
size~\cite{Lawler1982}. In general, there has been other extensive work on flow time related objectives and we refer the reader to  \cite{PST} for a survey.

The goal in geometric set cover problems is to improve the $O(\log n)$ set-cover bound using 
geometric structure. This is an active area of research and various different techniques have been developed.
However, until recently most of these techniques applied only to the {\em unweighted} case. 
A key idea is the connection between set covers and  
$\epsilon$-nets  \cite{BronnimannG95}, where an $\epsilon$-net is a sub-collection of sets that covers all the points that lie in at least an $\epsilon$ fraction of the input sets. 
For any geometric problem, existence of 
 $\epsilon$-nets of size at most $(1/\epsilon) g(1/\epsilon)$ implies $O(g(OPT))$-approximate solution for unweighted set cover~\cite{BronnimannG95}.
Thus, proving better bounds on sizes of $\epsilon$-nets  (an active research of research
is discrete geometry) directly gives improved guarantees for unweighted set-cover.
In a surprising result,~\cite{ClarksonV07} related the guarantee for unweighted set-cover
to the union complexity of sets.
If particular, if the sets have union complexity $O(n h(n))$,
which roughly means that the number of points on the boundary of
the union of any collection of $n$ sets is $O(n h(n))$,
then one can obtain an $O(h(n))$ approximation~\cite{ClarksonV07}. This was subsequently improved 
to $O(\log (h(n))$ \cite{Varadarajan09}.
In certain cases these results also extend to the  unweighted multi-cover case \cite{CCS}.
However, these techniques do not apply to weighted set cover problems: the problem is that
these techniques may sample some sets with much higher 
probability than that specified by the LP relaxation. 
In a recent breakthrough,  Varadarajan gave a new quasi-uniform sampling technique~\cite{Varadarajan10}
that obtains a $ 2^{O(\log^* n)} \log(h(n))$ approximation for weighted geometric set cover problems
with union complexity $O(nh(n))$. In fact his result gives an improved guarantee of $O(\log h(n))$ if $h(n)$ grows with $n$ (even very mildly such as $\log \log \cdots \log n$, where the $\log$ is iterated  $O(1)$ times).

{\em Organization:}
The paper is organized as follows. In
section \ref{sec:reduce} the reduction from GSP to R2C is given.
In section \ref{sec:preliminaries} we give the LP formulation
of R2C and explain the initial preprocessing of the LP solution.
In section \ref{sec:heavy}
we explain how to reduce part of the problem
of rounding the LP solution to an instance of the R3U problem.
In section \ref{sec:light}
we explain how to reduce part of the problem
of rounding the LP solution to an instance of the R2M problem.

\section{The Reduction from GSP to R2C}
\label{sec:reduce} 
Our goal in this section is to prove Theorem \ref{red}.
We accomplish this by giving a reduction from GSP
to R2C, and then showing that this reduction increases the objective 
value of the optimal solution by at most a factor of four
(Lemma \ref{gc:s}), and that this reduction doesn't shrink the
objective value of the optimal solution (Lemma \ref{s:gc}).

\begin{thm}
\label{red}
A polynomial-time $\alpha$-approximation algorithm for R2C 
implies a polynomial-time $4\alpha$ approximation algorithm for GSP.
\end{thm}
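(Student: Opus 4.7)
The plan is to construct a polynomial-time reduction $\Phi$ from a GSP instance $I$ to an R2C instance $\Phi(I)$, together with a polynomial-time back-translation $\Psi$ mapping any feasible R2C cover to a feasible preemptive single-machine schedule. The theorem then follows from the two announced lemmas: $\OPT_{R2C}(\Phi(I)) \le 4\cdot\OPT(I)$ (Lemma \ref{gc:s}) and that $\Psi$ sends any cover of weight $W$ to a schedule of cost at most $W$ (Lemma \ref{s:gc}). Chaining gives $4\alpha$: an $\alpha$-approximate cover has weight at most $\alpha\cdot\OPT_{R2C}(\Phi(I)) \le 4\alpha\cdot\OPT(I)$, so $\Psi$ produces a schedule of cost at most $4\alpha\cdot\OPT(I)$.

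For the reduction I would discretize each job's accumulated cost by powers of two. For each job $j$ and integer $k\ge 0$, let $T_j^k$ be the smallest $t$ with $W_j(t):=\sum_{s=r_j+1}^{t} w_j(s) \ge 2^k$. Give job $j$ a distinct horizontal $y$-strip, orient $x$ as a reversed-time coordinate (so rectangles abutting $x=0$ represent ``work done by some time''), and for every $(j,k)$ place a demand point $p_{j,k}$ at the reversed-time coordinate of $T_j^k$ inside job $j$'s strip, with demand $d_{p_{j,k}} = p_j$. The rectangle family should include \emph{scheduling} rectangles whose geometry encodes processing opportunities for each job --- capacities drawn from job sizes, $x$-extents chosen to cover exactly the threshold points whose deadlines can still be met, with weights calibrated to charge the processor-time reservation --- together with \emph{penalty} rectangles of capacity $p_j$ and weight $2^{k+1}$ that cover only $p_{j,k}$, modelling the option of paying the threshold cost in lieu of scheduling on time.

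To prove Lemma \ref{gc:s}, take an optimal schedule with completion times $C_j^*$ and build a cover: for each job $j$, select the scheduling rectangles that witness its actual processing and the penalty rectangles of every $p_{j,k}$ with $T_j^k \le C_j^*$. The per-job penalty weight is the geometric sum $\sum_{k\,:\,T_j^k \le C_j^*} 2^{k+1}$, dominated by twice its top term and hence at most $4 W_j(C_j^*)$, while the scheduling-rectangle weights aggregate to at most the processor cost built into the pricing. Summing over $j$ yields an R2C cover of weight at most $4\cdot\OPT(I)$. To prove Lemma \ref{s:gc}, read off from any cover $S$ the commitments $(\tau_j)$ induced on the jobs by the scheduling rectangles selected; because demands $d_{p_{j,k}} = p_j$ match the rectangle capacities, each point $p_{j,k}$ is covered either by scheduling rectangles (implying $j$'s work was reserved before $T_j^k$) or by its penalty rectangle of weight $2^{k+1}$ (which pays the cost directly). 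An earliest-deadline-first exchange argument then realises $(\tau_j)$ as a feasible preemptive single-machine schedule with total GSP cost bounded by the weight of $S$.

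The main obstacle is encoding the one-machine constraint inside a pure set-cover formulation that has no intrinsic packing constraint. The scheduling rectangles must be priced so that reserving processor time for two overlapping jobs costs the sum of the individual reservations (blocking cheap double-booking by covers), yet cheap enough that an optimal schedule induces a cover within a factor $4$ of $\OPT(I)$. The factor $4$ decomposes as two factors of $2$: one from the power-of-two discretization of each cost profile, and one from the geometric summation of penalty weights. Verifying that the chosen pricing closes both directions simultaneously, and in particular that the EDF realisation in Lemma \ref{s:gc} introduces no further blow-up, is the delicate calibration at the heart of the proof.
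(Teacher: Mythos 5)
Your high-level framing (reduce $I$ to an R2C instance $\Phi(I)$, prove $\OPT_{R2C}(\Phi(I)) \le 4\,\OPT(I)$ and that any cover of weight $W$ pulls back to a schedule of cost at most $W$, then chain) matches the paper exactly. But the reduction you sketch is structurally different and, in its crucial part, does not work.

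You place one demand point $p_{j,k}$ per (job, cost-class) pair, each with demand $p_j$, living in a private $y$-strip for job $j$, and you rely on unspecified ``scheduling rectangles'' priced to ``charge the processor-time reservation'' and ``block cheap double-booking.'' This is the gap, and you flag it yourself: R2C is a pure covering problem, so nothing prevents a solution from selecting overlapping scheduling rectangles for several jobs that claim the same processor time. No choice of rectangle geometry or weights can, inside a covering LP, penalize a solution for ``booking'' two jobs at once, because covering solutions are only rewarded for covering more, never penalized for covering too much. As a result, your Lemma~\ref{s:gc} step --- extracting a feasible preemptive schedule from an arbitrary cover by an EDF exchange argument --- would fail: a cheap cover could cover every $p_{j,k}$ with scheduling rectangles that jointly claim more than $|X|$ units of work inside some window $X$, and no schedule can realize it.

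The paper avoids the problem by encoding the single-machine constraint in the \emph{demands}, not the rectangles. It creates one point per candidate time window $X=[t_1,t_2)$ (at coordinates $(t_1,t_2)$) with demand $d_p=\max(0,\,P(X)-|X|)$, the amount of work released inside $X$ that \emph{must} spill past $t_2$ in any feasible schedule. There is then exactly one rectangle $R^j_k=[0,r_j]\times I^j_k$ per (job, cost-class) pair, of capacity $p_j$ and weight $2^k-1$, and $R^j_k$ covers $p$ precisely when $r_j\in X$ and $t_2$ is a class-$k$ time for $j$, i.e.\ selecting $R^j_k$ means ``$j$ is delayed into class $k$.'' Coverage of $p$ then says: enough released work is delayed past $t_2$, which by Hall's theorem is exactly the scheduling-feasibility condition on the deadlines induced by the chosen classes. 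This is the key idea missing from your sketch, and it is what makes the back-translation in Lemma~\ref{s:gc} go through with no EDF calibration at all. Your factor-of-4 accounting via a geometric sum is in the same spirit as the paper's ($\sum_{i\le k}(2^i-1)\le 4\cdot 2^{k-1}$), but without the $P(X)-|X|$ demand construction the reduction does not close.
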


\medskip

{\em Definition of the Reduction from GSP to R2C:}
From an arbitrary instance $\mathcal I$ of GSP,
we explain how to create an instance $\mathcal I'$ of R2C.
Considering $\mathcal I$,
we say that a time $t > r_j$ is of class $k \geq 1$ with respect to job $j$ if 
the cost of finishing $j$ at time $t$ lies in $[2^{k-1},2^k-1]$, 
i.e.  $\sum_{t'=1}^t w_{j}(t')  \in [2^{k-1},2^k-1]$. We say that $t$ is of class $0$, if the cost of finishing $j$ at $t$ is $0$.
Let $I_k^j$ denote the (possibly empty) time interval of class $k$ times with respect to job $j$.
Let $\cal{T}$ denote the set of all points that are endpoints of the intervals 
of the form $I_k^j$ for some job $j$ and class $k$. 
For each time interval $X$ of the form $X=[t_1,t_2)$, 
where $t_1 < t_2$ and $t_1,t_2 \in \cal{T}$, we create a point $p$ in $\mathcal I'$
with demand
$d_p = \max(0,P(X) - |X|) = \max(0,P(X)-(t_2 - t_1))$, where $P(X)$ denotes the total size of jobs that are released during $X$, 
i.e. $P(X) = \sum_{j: r_j \in [t_1,t_2)} p_j$.
For each job $j$ in $\mathcal I$ 
and $k\geq 0$, we create a rectangle $R^j_k = [0,r_j] \times I^j_k$
in $\mathcal I'$ with capacity $p_j$ and weight $2^k-1$.
We note that the rectangles $R^j_0,R^j_1,\ldots$ corresponding to the same job are pairwise disjoint.

Without loss of generality, we may assume that the time horizon is $nP$,
otherwise the instance can be divided into disjoint non-interacting subsets.
Thus the maximum cost for any job can be $nPW$, so $k \leq \min(nP,\log(nPW)$. This implies that 
we can assume that $\log W = O(nP)$ and that  $|\mathcal{T}| = O(n \log (nPW) )$, 
i.e. polynomial in the size of the input. 
Throughout the paper we will use $m$ to denote the number of points in the R2C problem.
Clearly, $m = O(|\mathcal{T}|^2)$. 

\begin{lemma}
\label{gc:s}
If there is a feasible solution $S$ to $\mathcal I$ 
with objective value $v$, 
then there there is a feasible solution $S'$ to $\mathcal I'$
with objective value at most $4v$.
\end{lemma}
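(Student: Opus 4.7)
The plan is to produce $S'$ by selecting, for each scheduled job $j$, a short ``ladder'' of rectangles associated with $j$, and then to verify the weight bound and feasibility separately.

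For each job $j$, let $k_j$ be the class of $C_j$ with respect to $j$, so $F_j(C_j) := \sum_{s=r_j+1}^{C_j} w_j(s) \in [2^{k_j-1},\, 2^{k_j}-1]$ (and $k_j = 0$ when $F_j(C_j) = 0$). I would put into $S'$ every rectangle $R^j_k$ with $0 \le k \le k_j$ and $I^j_k \ne \emptyset$. The total weight contributed by job $j$ is at most $\sum_{k=0}^{k_j}(2^k - 1) < 2^{k_j+1} = 4 \cdot 2^{k_j-1} \le 4 F_j(C_j)$, giving a total weight at most $4\sum_j F_j(C_j) = 4v$.

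For feasibility, fix a point $p$ associated with $X=[t_1,t_2)$, and take its coordinates to be $x_p = t_1 - 1/2$ and $y_p = t_2 - 1/2$. With these, the rectangle $R^j_k = [0,r_j] \times I^j_k$ covers $p$ iff $r_j \ge t_1$ and $y_p \in I^j_k$. Since $I^j_k \subset (r_j,\infty)$, the second condition also forces $r_j < t_2$ (so $r_j \in X$) and pins down $k$ uniquely as the class $\kappa(j,p)$ of $y_p$ with respect to $j$. Thus the rectangles of $S'$ covering $p$ are exactly the $R^j_{\kappa(j,p)}$ for $j$ with $r_j \in X$ and $\kappa(j,p) \le k_j$. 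Now, for any job $j$ with $r_j \in X$ and $C_j \ge t_2$, monotonicity of $F_j$ together with $y_p < t_2 \le C_j$ gives $\kappa(j,p) \le k_j$, so $R^j_{\kappa(j,p)} \in S'$ and it contributes capacity $p_j$ to the coverage of $p$. Hence the total capacity covering $p$ is at least $\sum_{j:\,r_j \in X,\, C_j \ge t_2} p_j$, which by the elementary scheduling inequality --- at most $|X|$ units of work from jobs released in $X$ can be processed inside the length-$|X|$ window, while $P(X)$ units are released in $X$ --- is at least $P(X) - |X| = d_p$.

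The main obstacle is the feasibility check, and within it the coordinate choice that identifies ``$R^j_k$ covers $p$'' with ``$r_j \in X$ and $k = \kappa(j,p)$''; once this is pinned down the ladder construction ensures the needed $R^j_{\kappa(j,p)}$ is always selected when $C_j \ge t_2$, and the remaining weight bound and volume accounting are routine.
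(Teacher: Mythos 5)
Your proof is correct and follows essentially the same approach as the paper: construct $S'$ by taking, for each job $j$, the ladder of rectangles $R^j_0,\ldots,R^j_{k_j}$ up through $j$'s completion class, bound the weight by a geometric series, and verify feasibility by observing that each job released in $X$ and finishing at or after $t_2$ contributes capacity $p_j$ via exactly one of its selected rectangles, while schedule feasibility bounds the work from $J_X$ that can finish inside $X$ by $|X|$. The only difference is that you explicitly fix coordinates for the point $p$ (at $(t_1-1/2,\,t_2-1/2)$) to make the ``$R^j_k$ covers $p$ iff $r_j \in X$ and $k=\kappa(j,p)$'' correspondence airtight, whereas the paper leaves that identification implicit; this is a welcome tightening, not a different argument.
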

\begin{proof}
For job $j$ in $\mathcal I$, 
let $k(j)$ denote the class during which $j$ finishes in $S$
(i.e. $k(j)$ is the smallest integer such that the cost 
incurred by $j$ in $S$ is $\leq 2^{k(j)}-1$).
Consider the solution $S'$ obtained by choosing for each job $j$, 
the intervals $I^j_0, \ldots, I^j_{k(j)}$.
Clearly, each job contributes at most $\sum_{i=0}^{k(j)} 2^i-1 \leq 2(2^{k(j)}-1) \leq 4 \cdot 2^{k(j)-1}$,  i.e. at most 4 times its contribution to $S$, and hence the total cost of $S'$  
is at most $4$ times the cost of $S$.

It remains to show that  $S'$ is feasible, i.e. for any point $p$, the total capacity of rectangles covering $p$ is at least $d_p$. 
Suppose $p$ corresponds to the time interval $X= [t_1,t_2)$ from $\mathcal I$. 
Let $J_X$ denote the jobs that arrive during $X$. 
For  each job $j \in J_X$ that completes after $t_2$, 
there is exactly one rectangle $R^j_k$ that covers $p$. 
Since $S$ is a feasible schedule, the total size of jobs in $J_X$ 
that can complete during $X$ itself cannot be more than 
$|X|=t_2-t_1$. Thus the jobs in $J_X$ that do not complete during $X$ 
must have a total size of at least $P(J_X) - |X|$,
which is the covering requirement for $p$.
\end{proof}

\begin{lemma}
\label{s:gc}
If there is a feasible solution $S'$ to $\mathcal I'$ 
with objective value $v'$, 
then there there is a feasible solution $S$ to $\mathcal I$
with objective value at most $v$.
\end{lemma}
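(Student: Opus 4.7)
The plan is to use $S'$ to impose deadlines on the jobs of $\mathcal I$ and then build $S$ by preemptive earliest-deadline-first (EDF). For each job $j$ let $K_j = \{k : R^j_k \in S'\}$ and $k^*(j) = \max K_j$; without loss of generality $0 \in K_j$ whenever $R^j_0$ exists, since that rectangle has weight zero. Set $d_j$ to be the right endpoint of the class interval $I^j_{k^*(j)}$. The cost accounting is then immediate: if $S$ meets every deadline then $C_j$ lies in a class no later than $k^*(j)$, so the cost charged to $j$ is at most $2^{k^*(j)}-1$, i.e.\ the weight of $R^j_{k^*(j)} \in S'$. Summing yields $\mathrm{cost}(S) \leq \sum_j (2^{k^*(j)}-1) \leq \sum_j \sum_{k \in K_j}(2^k-1) = v'$, which is the bound to be shown.

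The substantive step is verifying EDF feasibility, for which I will invoke the classical criterion that a preemptive single-machine schedule meeting all deadlines exists if and only if
\[
\sum_{j \,:\, r_j \geq t_1,\ d_j \leq t_2} p_j \;\leq\; t_2 - t_1
\]
for every $t_1 < t_2$. Because every $d_j$ lies in $\mathcal{T}$ and the set $\{j : r_j \geq t\}$ changes only at points of $\mathcal{T}$, it suffices to check this when $t_1, t_2 \in \mathcal{T}$. Fix such a pair, let $X = [t_1, t_2)$, and let $p$ be the R2C point attached to $X$. By the coordinates chosen in the reduction, a rectangle $R^j_k \in S'$ covering $p$ forces $j \in J_X$ and $t_2 \in I^j_k$; together with $k \leq k^*(j)$ this gives $d_j = \mathrm{end}(I^j_{k^*(j)}) \geq \mathrm{end}(I^j_k) > t_2$. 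Feasibility of $S'$ at $p$ then reads
\[
\sum_{j \in J_X,\ d_j > t_2} p_j \;\geq\; \sum_{R^j_k \in S',\, p \in R^j_k} p_j \;\geq\; d_p \;=\; P(X) - |X|,
\]
which rearranges to $\sum_{j \in J_X,\ d_j \leq t_2} p_j \leq t_2 - t_1$; any job with $r_j \geq t_2$ automatically has $d_j > r_j \geq t_2$, so the left-hand side of the feasibility criterion equals the sum over $J_X$ with $d_j \leq t_2$, and the criterion holds.

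The main obstacle, in my view, is the geometric bookkeeping above: one has to use the precise $(x_p, y_p)$ coordinates fixed in Section~\ref{sec:reduce} to argue that the rectangles of $S'$ covering $p$ are exactly those attached to jobs of $J_X$ whose $k^*(j)$-class interval extends past $t_2$, and to sort out the boundary conventions on the half-open class intervals $I^j_k$. Once that dictionary between the geometric picture and the scheduling picture is in place, both the per-job cost bound and the feasibility inequality fall out by one-line manipulations.
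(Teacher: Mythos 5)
Your proposal is correct and follows essentially the same route as the paper: define $d_j$ as the right endpoint of the highest-indexed class interval whose rectangle appears in $S'$, verify preemptive schedulability via the standard release-time/deadline interval criterion (the paper phrases this as Hall's theorem on a job-slot bipartite graph, which is equivalent to your EDF test), and bound the cost by observing that $2^{k^*(j)}-1$ is dominated by the total $S'$-weight charged to job $j$. The geometric bookkeeping you flag as the delicate point is handled in the paper by noting that rectangles of the same job are pairwise disjoint, so the rectangles of $S'$ covering $p$ inject into distinct jobs of $J_X$ that extend past $t_2$ — the same observation you use.
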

\begin{proof}
For each job $j$, let $h(j)$ denote the largest index such that the rectangle $R^j_{h(j)}$ lies in $S'$.
Let us set a deadline $d_j$ for $j$ as the right end point of $I^j_{h(j)}$.

We claim that there is a schedule $S$ that completes each job
$j$ by time $d_j$. Consider the bipartite graph defined as follows: We have time slots $1,2,\ldots,T$ on the right. For each job $j$, we have
$p_j$ vertices on the left, each of which is connected to vertices $r_j,\ldots,d_j-1$ on the right.
By Hall's theorem, a feasible schedule exists if and only if for any time interval $X$, the total size of jobs
that have both release times and deadlines in $X$ is at most $|X|$.
Moreover, it suffices to show such a result for intervals $X$ of the form 
$[r_a,d_b)$, for some jobs $a$ and $b$.
Equivalently, for any such time interval $X$, 
the jobs $j \in J_X$ that are released during $X$ and have $d_j$ after the end of $X$,
have a total size of at least $P(J_X) - |X|$. 
 
Note that by the definition of $\mathcal T$, then there is a point
$p$ in $\mathcal I'$ that corresponds to the interval $X$.
Then by the feasibility of $S'$, 
the total capacity of rectangles covering $p$ in $S'$
is at least $P(J_X) - |X|$. And as all of these rectangles 
correspond to different jobs in $\mathcal I$ (the rectangles corresponding to the same job are pairwise disjoint), we are done.

In $S$ the cost of $j$ is at most $2^{h(j)}-1$, since by the definition of the rectangle $R^j_k$ the cost of finishing a job by deadline $d_j$ is at most $2^{h(j)}-1$. Now, the cost incurred by $j$ in $I'$ is at least $2^{h(j)}-1$ (since the rectangle $R^j_{h(j)}$ already has cost $2^{h(j)}-1$). This implies that the cost of $S$ is at most that of $S'$.
\end{proof}

\bigskip

{\em Identical Release times:}
Without loss generality, let $r_j=0$ for all $j$. In this case, the above reduction become simpler. In particular, the first dimension corresponding to release time becomes irrelevant and we obtain the following problem. For each job $j$ and $k\geq 0$, there is an interval $I^j_k$ corresponding to class $k$ times with respect to $j$ and has capacity $p_j$ and weight $2^{k}-1$.
All relevant intervals $X$ are of the form 
$[0,t]$ for $t \in \cal{T}$ and have demand $J_X-|X|=D-t$, where $D$ is the total size of all the jobs. 
For each such $X=[0,t)$, we introduce a point $t$ with demand $d_t=D-t$.
The goal is to find a minimum weight subcollection of intervals $I^j_k$ such that covers the demand.
This is a special case of the following Generalized Caching Problem.

\medskip
{\em Generalized Caching Problem:} The input consists of a set of demands $d(t)$ at various time steps $t=1,\ldots,n$. In addition there is a collection of 
time intervals  $\cal{I}$, where each interval $ I \in \mathcal{I}$ has weight $w_I$, size $c_I$ and span
$[s_I,t_I]$ with $s_I,t_I \in \{1,\ldots,n\}$. The goal is to find a minimum weight subset of intervals that covers the demand. That is, find the minimum weight subset of intervals $S \subseteq \mathcal{I}$ such that 
$$ \sum_{I \in S: t \in [s_I,t_I]} c_I \geq d_t \qquad \forall t \in \{1,\ldots,n\}.$$

A $4$-approximation for this problem was obtained by Bar-Noy et al. \cite{BBF}, based
on the local-ratio technique. Their algorithm can equivalently be viewed as a primal dual algorithm
applied to a linear program with knapsack cover inequalities \cite{BR}.
This immediately implies a $16$-approximation for GSP in the case of identical release times.

\section{The LP Formulation for R2C}
\label{sec:preliminaries}

The following is a natural integer programming formulation for
R2C. For each rectangle $r \in {\mathcal R}$ there is an indicator variable
$x_r$ specifying whether or not the rectangle $r$ is selected.
\begin{eqnarray}
\min\sum_{r \in {\mathcal R}} w_r x_r &&  \qquad \mbox{s.t.} \nonumber \\
\label{cap:simple} 
\sum_{ r : p \in r}  c_r x_r &\ge& d_p  \qquad \qquad \forall p \in {\mathcal P} \\
 x_r & \in &  \{0,1\} \qquad \qquad r \in {\mathcal R} 
\end{eqnarray}

It is easily seen 
that the natural relaxation of this linear program, where $x_{r} \in \{0,1\}$ 
is replaced by $x_{r} \in [0, 1]$, has a
large integrality gap. In particular, this is true even when ${\mathcal P}$ consist of a single point, in which case the problem is equivalent to the knapsack cover problem \cite{CarrFLP00}.
Thus, we strengthen this LP by adding knapsack cover inequalities introduced in \cite{CarrFLP00}
have proved 
to be a useful tool to address capacitated covering problems~\cite{gencaching, shmoys, sviri, BansalGK2010, CGK10}. 

This gives the  the following linear program:
\begin{align}
\label{a} \min\sum_{r \in {\mathcal R}}  w_{r} x_{r} && \mbox{s.t.}\\
\label{c}\sum_{ r \in {\mathcal R}\setminus S: p \in r} 
\nonumber \min \left\{ c_r,  \max(0,d_p - c(S)) \right\} x_{r} &\ge& \\
d_p - c(S) \qquad \forall p\in {\mathcal P}, S \subseteq {\mathcal R} &&\\
\label{b} x_r  \in  [0,1]  \qquad \forall r \in {\mathcal R} && 
\end{align}
Here $c(S)$ denotes the total capacity of rectangles in $S$.
The constraints are valid for the following reason: For any subset $S$,  even if all the items in $S$ are chosen, at least a demand of $d_p -c(S)$ must be covered by remaining rectangles. 
Moreover, truncating an item size to the residual capacity does not affect the feasibility of an integral solution. Even though there are exponentially many constraints per point, 
a feasible $(1+\epsilon)$-approximate solution, for any constant $\epsilon>0$, can be found using the Ellipsoid algorithm, see
\cite{CarrFLP00} for details. Further 
only the cost incurs the $(1+\epsilon)$ factor loss, 
all the constraints are satisfied exactly.
We will refer the inequalities in line (\ref{c}) 
as the {\em knapsack cover inequalities}.

Let $x$ be some $(1+\epsilon)$-approximate feasible solution to the linear
program for R2C in lines (\ref{a})-(\ref{b}), 
and let $\OPT$ denote $x$'s objective value.

We now apply some relatively standard steps to simplify $x$.
Let $\beta$ be  a small constant, $ \beta = 1/12$ suffices. 
Let $S$ denote the set of rectangles for which $x_r\geq \beta$. 
We pick all the rectangles in $S$, i.e. set $x_r=1$. Clearly, this cost of this set is at most $1/\beta$ times the LP solution.

For each point $p$, let $S_p = S \cap \{r: r \in {\mathcal R},p \in r\}$ denote the set of rectangles in $S$ that cover $p$. Let us consider the residual instance, where the set of rectangles is restricted to ${\mathcal R} \setminus S$ 
and the demand of a point is $d_p - c(S_p)$. If $d_p-c(S_p) \leq 0$, then $p$ is already covered by $S$ and we discard it.

Since the solution $x$ satisfied all the knapsack cover inequalities for each point $p$ and set $S$, and hence in particular 
for every $p$ and corresponding the set $S_p$, we have
that 
$$ \sum_{ r \in {\mathcal R}\setminus S_p: p \in r} 
\min \left\{ c_r, d_p - c(S_p) \right\} x_{r} \ge d_p - c(S_p) $$
Henceforth, this is the only fact we will use about the solution $x$ (in particular, we do not care that $x$
satisfies  several other inequalities for each point $p$).
Let us scale the solution $x$ restricted to ${\mathcal R} \setminus S$ by $1/\beta$ times. Call this solution $x'$. Note that since $x_r \leq \beta$, it still holds that   $x'_r \in [0,1]$.  Clearly, $x'$ satisfies
$$ \sum_{ r \in {\mathcal R}\setminus S_p: p \in r} 
\min \{ c_r, d_p - c(S_p) \} x'_{r} \ge \frac{d_p - c(S_p)}{\beta} $$
Let us define the new demand $d'_p$ of $p$ as 
$d_p - c(S_p)$ rounded up to the nearest integer power of $2$.
Similarly, defined a new capacity $c'_r$ of 
each rectangle $r$ to be $c_r$ rounded down to the nearest integer power of $2$.
$x'$ still satisfies,
$$ \sum_{ r \in {\mathcal R}\setminus S_p: p \in r}  \min \{ c'_r, d'_p \} x'_{r} \ge  \frac{d'_p}{4\beta} $$

We call $r$ a class $i$ rectangle if $c'_r=2^i$. Similarly, $p$ is a class $i$ point if $d'_p=2^i$.
We call a point $p$ {\em heavy} if is covered by rectangles with class at least as high as that of $p$ in the LP solution, more precisely
if:
\begin{equation}
\label{eq:heavy}
\sum_{r \in {\mathcal R}': c'_r \geq  d'_p} \min(c'_r, d'_p) x'_{r} \ge d'_p. 
\end{equation}
Equivalently, $p$ is heavy if 
$$\sum_{r \in {\mathcal R}': c'_r \geq  d'_p}  x'_{r} \ge 1.$$ 
Otherwise we say that a point is {\em light}. Thus a light point satisfies:
\begin{equation}
\label{eq:med1}
\sum_{r \in {\mathcal R}': c'_r \leq  d'_p}  c'_r x'_{r} \ge 
\left(\frac{1}{4\beta}-1\right) d'_p =
\left(\frac{1 - 4 \beta}{4\beta}\right) d'_p
\end{equation}
We now have different algorithms for covering heavy and light points.

\section{Covering Heavy Points}
\label{sec:heavy}

In this section 
we show how reduce the problem of covering the heavy
points by larger class rectangles to R3U.
We then show that the resulting instances of R3U have low union complexity.
In particular any $k$ cuboids in a resulting R3U instance
has union complexity $O(k \log P)$.
By Varadarajan's quasi-uniform sampling technique~\cite{Varadarajan09}
this gives 
a solution that is an $2^{O(\log^*m)} \log \log P = O(\log \log nP)$ approximation to the 
optimal fractional solution of this R3U instance. 
As $x'$ gives
a feasible fractional solution to this R3U instance, this means
that the cost of cuboids that the algorithm selects is  $O(\log \log nP)$
approximate with $\OPT$.

\medskip

{\em The Problem of 
Covering the Heavy Points to R3U:} 
The reduction takes as inputs 
the instance $\mathcal I'$ for heavy points obtained at the end of the previous section, and the LP solution $x'$ 
and creates an instance $A$ of R3U.
For each heavy point $p = (x, y) \in {\mathcal I'}$ with demand $d'_p$, 
there is a point $(x,y, d'_p)$ in $\mathcal A$.
For each rectangle $r = [0, x] \times [y_1, y_2]$ in $\mathcal I'$ with capacity $c'_r$, 
we define a right cuboid $R_r = [0,x]\times [y_1, y_2] \times [0,c'_r]$
of weight $w_r$.

\medskip
It is clear that there is a one to one correspondence between
a covering of heavy points in $\cal I'$ by rectangles of no smaller
class and a covering of the points in $A$ by cuboids. 
Given a collection $X$ of $n$ geometric objects, the union complexity of $X$ is
number of edges in the arrangement of the boundary of $X$. For 3-dimensional objects, this is the 
total number of vertices, edges and faces on the boundary of $X$. 
In Lemma \ref{2dunion} and Lemma \ref{3dunion}
we bound the union complexity of cuboids in $A$.

\begin{lemma}
\label{2dunion}
For any collection of $k$ rectangles of the type  $[0,r]\times [s,t]$, the union complexity  is $O(k)$.
\end{lemma}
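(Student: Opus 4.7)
The plan is to exploit the fact that every rectangle abuts the $y$-axis in order to reduce the union-complexity question to the complexity of a one-dimensional upper envelope of single-interval step functions. First I would observe that the intersection of any rectangle $[0,r_i]\times[s_i,t_i]$ with a horizontal line $y=c$ is either empty or an interval of the form $[0,r_i]$. Consequently, the horizontal cross-section of the union at any height $y$ is either empty or a single interval $[0,R(y)]$, where
\[
R(y) \;=\; \max\{\, r_i : y \in [s_i,t_i]\,\}.
\]
In particular, each connected component of the union is simply connected (no holes), and the right boundary of the union is precisely the graph of the step function $y \mapsto R(y)$.

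Next I would count the breakpoints of $R$. Since $R$ is the pointwise maximum of $k$ functions, each of which is $r_i$ on $[s_i,t_i]$ and $-\infty$ elsewhere, the value of $R(y)$ can only change at a $y$-coordinate of the form $s_i$ or $t_i$. There are at most $2k$ such coordinates, so $R$ is piecewise constant with $O(k)$ pieces. Hence the right boundary of the union consists of $O(k)$ axis-parallel edges: one horizontal segment per constant piece of $R$, together with a vertical segment on some line $x=r_i$ at each breakpoint.

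Finally I would account for the rest of the boundary. The left boundary of the union lies on the $y$-axis and contains exactly one segment per connected component of $\bigcup_i [s_i,t_i]$, which is at most $k$ segments. Combining this with the $O(k)$ edges on the right boundary (which already include the horizontal caps at the top and bottom of each component, produced at the breakpoints where $R$ jumps to or from $0$), the boundary of the union has $O(k)$ edges, vertices, and faces, giving the claimed $O(k)$ union complexity. The only real content is the cross-section observation that turns a two-dimensional union-complexity question into a one-dimensional upper-envelope problem with trivially few breakpoints; the remainder is a direct counting argument, so I do not anticipate any serious obstacle.
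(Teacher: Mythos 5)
Your proof is correct, and it takes a genuinely different route from the paper's. The paper also focuses on the right boundary of the union (it says "view the union from $(\infty,0)$"), but then bounds the number of visible faces by arguing that no alternation $abab$ of two rectangle labels can appear, so the face sequence is a Davenport--Schinzel sequence of order 2, which has length at most $2k-1$. You instead observe that the right boundary is the graph of the upper envelope $R(y)=\max\{r_i : y\in[s_i,t_i]\}$ of $k$ single-interval step functions, and directly count that $R$ can change value only at the $2k$ endpoints $s_i,t_i$, giving $O(k)$ pieces without invoking any Davenport--Schinzel machinery. Your argument is more elementary and self-contained; the paper's is shorter to state if one already has the DS-sequence bound on hand, and the DS viewpoint generalizes more readily to boundary curves that are not axis-parallel steps. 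One cosmetic slip: the constant pieces of $R$ correspond to \emph{vertical} edges of the right boundary (constant $x$, varying $y$) and the breakpoints to \emph{horizontal} jump edges, the reverse of what you wrote; this does not affect the $O(k)$ count.
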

\begin{proof}
For each rectangle of the form $[0,r]\times[s,t]$ has a side touching the $y$-axis. Let us view of union of $k$ such rectangles from $(\infty,0)$. Consider the vertical faces on the boundary of the union.
For any two rectangles $a$ and $b$, the pattern $abab$ or $baba$ cannot appear. Thus the vertical faces from a Davenport Schinzel sequence of order 2, which has size at most $2k-1$ (see for example \cite{Mat}, chapter 7). Since the number of vertices is $O(1)$ times the number of faces, the result follows. 
\end{proof}

\begin{lemma}
\label{3dunion}
The union complexity of any $k$ cuboids in $\mathcal{R}$ is $O(k \log P)$.
\end{lemma}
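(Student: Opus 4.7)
The plan is to exploit the preprocessing from Section~\ref{sec:preliminaries}: the capacities $c'_r$, which become the cuboid heights $z_r$, are integer powers of $2$ bounded by $P$, so the $z$-extents of the cuboids in $\mathcal{R}$ take at most $O(\log P)$ distinct values. I would reduce the 3D bound to $O(\log P)$ applications of Lemma~\ref{2dunion}, one per horizontal level. To that end, I would work with the upper envelope $h(x,y) = \max\{z_r : (x,y) \in [0,x_r]\times[y_r^1,y_r^2]\}$, which is piecewise constant and takes values only in $\{0\}\cup\{2^i : 0 \leq i \leq \lceil\log_2 P\rceil\}$. For each such level $2^i$, the superlevel set $\{h\geq 2^i\}$ is exactly the union in the $xy$-plane of the projections of cuboids with $z_r\geq 2^i$; these projections are axis-parallel rectangles abutting the $y$-axis, so Lemma~\ref{2dunion} bounds the complexity of this superlevel set by $O(k)$.

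With the superlevel sets in hand, I would decompose the boundary of the 3D union $U$ into three parts. The floor $U\cap\{z=0\}$ coincides with the $xy$-projection of $U$ and has complexity $O(k)$ by Lemma~\ref{2dunion}. The back face $U\cap\{x=0\}$ is a union in the $yz$-plane of rectangles $[y_r^1,y_r^2]\times[0,z_r]$ all abutting the line $z=0$, so by a reflected form of Lemma~\ref{2dunion} it also has complexity $O(k)$. The remaining ``terrain'' part consists of horizontal roof pieces at heights $z=2^i$ together with vertical step walls joining them: each roof piece at level $2^i$ equals $\{h\geq 2^i\}\setminus\{h\geq 2^{i+1}\}$, and each step wall lies above an edge of the boundary of some superlevel set. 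Both therefore contribute $O(k)$ per level, and summing over the $O(\log P)$ distinct levels yields $O(k\log P)$, dominating the two $O(k)$ planar faces.

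The main obstacle I anticipate is the careful charging of the 3D terrain features to the planar complexities of the superlevel sets. One must argue that each vertex and each edge on the vertical step walls can be associated with exactly one superlevel-set boundary in the $xy$-plane, so that the 3D count is genuinely an $O(\log P)$-fold sum of the per-level 2D bound rather than, say, the complexity of the full overlay of all $O(\log P)$ superlevel sets (which a priori could be larger). Once that bookkeeping is settled, the bound $O(k\log P)$ follows immediately from Lemma~\ref{2dunion}.
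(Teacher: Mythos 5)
Your proposal is correct and takes essentially the same route as the paper: both observe that the cuboid heights are powers of two, hence at most $O(\log P)$ distinct levels, slice the boundary by level, and apply Lemma~\ref{2dunion} to each planar cross-section/superlevel set. Your version simply spells out the bookkeeping (floor, back face, roofs, step walls, and the fact that only consecutive superlevel-set pairs---not the full overlay---matter, thanks to nesting) that the paper's one-line proof leaves implicit.
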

\begin{proof}
This directly follows from lemma \ref{2dunion} and noting that the number of distinct heights is $O(\log P)$. In particular, since the heights of powers of 2, consider the  slice of the arrangement between $z=2^{i}$ and $z=2^{i+1}$. This corresponds to union of rectangles of the form $[0,r]\times [s,t]$. 
\end{proof}

\noindent
{\em Remark:} We remark that the bound in lemma \ref{3dunion} is tight for kind of cuboids we consider here.

The following result  is implicit in \cite{Varadarajan10}.
\begin{thm} [\cite{Varadarajan10}]
There is a randomized polynomial-time algorithm that,
given a weighted geometric set cover instance $I$
where the union complexity of any $k$ objects is $k* g(k)$,
produces an set cover of weight at most a factor of $2^{O(\log^* |I|)} \log g(|I|)$ times the 
optimal fractional set cover.

If the function $g(n)$ grows even very mildly with $n$, say in particular that $g(n) \geq \log \log \cdots \log n$, where the $\log$ is iterated $O(1)$ times, then the approximation guarantee above is $O(\log g(|I|))$.
\end{thm}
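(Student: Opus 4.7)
The plan is to follow Varadarajan's quasi-uniform sampling framework, which extends the Clarkson--Varadarajan $\epsilon$-net approach from unweighted to weighted geometric set cover. I would begin by solving the natural LP relaxation and obtaining an optimal fractional cover $x^*$ with value $\OPT_f$. The task is then to round $x^*$ to an integral cover while losing only a factor of $2^{O(\log^* |I|)} \log g(|I|)$.

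The first conceptual step is the bridge from union complexity to $\epsilon$-nets. By the Clarkson--Varadarajan construction (and its refinement), a set system whose union complexity is bounded by $k \cdot g(k)$ admits an $\epsilon$-net of size $O((1/\epsilon)\log g(1/\epsilon))$ with respect to any distribution on points. Plugging this into the Br\"onnimann--Goodrich framework already yields an $O(\log g(|I|))$-approximation for \emph{unweighted} geometric set cover when a fractional solution is available. So the real work lies in extending this to the weighted case without paying an extra $\log W$ factor from bucketing the sets into $O(\log W)$ weight classes.

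This is where quasi-uniform sampling enters. The key gadget is a distribution $\mu$ over subsets of $\mathcal R$ such that (i) the expected total weight of a sample is $O(\OPT_f)$, and (ii) inside any fixed weight class, the conditional sampling probabilities of individual sets differ by at most a constant factor (the ``quasi-uniform'' property). Property (ii) is exactly what is needed to invoke the unweighted $\epsilon$-net bound \emph{within each class simultaneously}: after a single sample from $\mu$, the fraction of demand left uncovered in any class is small enough that an $\epsilon$-net of size $O((1/\epsilon) \log g(1/\epsilon))$ suffices to patch it up, and the crucial point is that the $\epsilon$-net bound applies class-by-class without summing a $\log W$ overhead, because the sampling is already aware of the LP weights. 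I would construct $\mu$ explicitly using a careful tree decomposition of the sets by weight, showing that the sampled cover has expected weight $O(\log g(|I|)) \cdot \OPT_f$ at the cost of a $2^{O(\log^*|I|)}$ slack introduced during the balancing step.

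The $\log^*$ factor itself arises from iterating the sampling procedure: each round reduces the fractional residual by a controlled factor, and after $O(\log^* |I|)$ rounds the residual becomes trivial. The main obstacle, and the heart of Varadarajan's argument, is showing that a quasi-uniform distribution with the stated properties actually exists and can be constructed in polynomial time --- naively the requirement ``proportional to $w_R x^*_R$ globally, yet nearly uniform inside each class'' is contradictory, and it is precisely the iterated refinement that reconciles them at the price of the $2^{O(\log^*)}$ factor. Finally, when $g$ itself grows with $n$ (say $g(n) \geq \log^{(c)} n$ for some fixed $c$), the $\log g(|I|)$ term dominates the iterated-log slack, so one can absorb $2^{O(\log^* |I|)}$ into $O(\log g(|I|))$, yielding the sharper bound in the second statement of the theorem.
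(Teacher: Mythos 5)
The paper does not prove this theorem: it is stated as a black-box citation, with the remark that the result ``is implicit in~\cite{Varadarajan10}.'' So there is no in-paper argument to compare your sketch against; what can be assessed is whether your sketch is a faithful account of Varadarajan's technique.

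At the broad structural level you have the right ingredients: an LP relaxation, the Clarkson--Varadarajan bridge from low union complexity to small $\epsilon$-nets, and a sampling distribution whose probabilities track the LP values up to a $2^{O(\log^*)}$ distortion. But two specifics are off, and they are not cosmetic. First, the sampling probabilities should be (roughly) proportional to $x^*_R$, \emph{not} to $w_R x^*_R$ as you write; the weights enter only through the objective, so that the expected cost is $\sum_R w_R \Pr[R\text{ picked}] \approx 2^{O(\log^*)} \log g \cdot \sum_R w_R x^*_R$. Sampling proportionally to $w_R x^*_R$ would bias the sample toward heavy sets, which is the opposite of what you want. Second, ``quasi-uniform'' in Varadarajan's sense is not a statement about near-uniform conditional probabilities within weight classes; his construction does not partition the sets by weight at all. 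It means that \emph{every} set is sampled with probability within a $2^{O(\log^*)}$ factor of a common scaling of its LP value --- precisely to avoid the failure mode of unweighted reweighting schemes (Br\"onnimann--Goodrich-style), which can end up sampling some sets far more often than their $x^*_R$ warrants. Relatedly, the $2^{O(\log^*)}$ factor comes from the depth of a recursive partition-and-recombine scheme used to build the sample, not from iterating a residual-reduction loop $\log^*$ times as you describe. Your framing in terms of weight classes and per-class $\epsilon$-net patching would reintroduce exactly the $\log W$ overhead that the theorem is designed to eliminate, so as written the sketch would not yield the claimed bound.
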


Thus we can conclude that in polynomial time one find rectangles
in the R2C instance $\cal I'$ that 
covers all the heavy points and that has weight at most 
$O(\log \log nP )$ times $\OPT$.

\section{Covering Light Points}
\label{sec:light}

In this section 
we show how to decompose the problem of covering the light 
points to $\log P$ instances of R2M, one instance $B_\ell$ for 
each possible rectangle capacity class $\ell$. The decomposition ensures that an $\alpha$ approximation for R2M implies an cover for light points in $I'$ with cost 
$O(\alpha)$ times $\OPT$.
We then give an obtain an $O(\log \log m) = O(\log \log nP)$ approximation for an R2M instance on $m$ points. 
To do this, we relate the multi-cover 
problem to the set cover problem (where all demands are $1$) and show that the set cover problem has
a 2-approximation with respect to the fractional solution. 
This implies that the cost of rectangles that the algorithm selects for $\cal I'$ is $O(\log \log m)$
approximate with $\OPT$.

\noindent
{\em Remark:} Better results for the R2M problem can be obtained by adapting Varadarajan's quasi-uniform
sampling technique to multi-cover instances. However, we follow the simpler approach here since it suffices for our purposes.

\medskip

{\em The Problem of 
Covering the Light Points to the instances $B_\ell$ of R2M:} 
The reduction takes as inputs 
the instance $\mathcal I'$ for R2C (restricted to light points), and the LP solution $x'$ 
and for each $\ell =0,1,2,\ldots$ creates an instance $B_{\ell}$ of R2M.
The points in $B_\ell$ are the same as the points in $\mathcal I'$.
The demand of a point $p$ in $B_\ell$ is defined as 
$d^{\ell}_{p}=\lfloor \sum_{r : c'(r) = 2^\ell} x_r' \rfloor$.
The rectangles in $B_\ell$ are precisely the class $\ell$ rectangles in $\cal I'$, i.e. those of capacity  exactly $2^\ell$.
The weight of the rectangles in $B_\ell$ are the the same as in $\cal I'$.
The goal is to cover each point $p \in B_{\ell}$ by $d^{\ell}_p$ distinct rectangles. 


\begin{lemma}
Consider the union $S$ of the rectangles picked in the solutions $S_\ell$
to the instances ${\mathcal B}_\ell$.
Then $S$ satisfies the demand of  all the light points in $\mathcal I'$.
\end{lemma}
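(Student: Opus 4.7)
The plan is to fix an arbitrary light point $p \in \mathcal{I}'$, whose residual demand is $d_p - c(S_p) \leq d'_p$, and to show that the rectangles in $S = \bigcup_\ell S_\ell$ covering $p$ have total capacity at least this residual demand. Since $S_\ell$ is feasible for $B_\ell$, it contains at least $d^\ell_p$ distinct class-$\ell$ rectangles covering $p$; each has capacity $2^\ell$, so the total capacity that $S$ contributes to $p$ is at least $\sum_\ell 2^\ell d^\ell_p$. The entire task reduces to lower bounding this sum using the light-point inequality~(\ref{eq:med1}).

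For each class $\ell$, write $s_\ell = \sum_{r : c'_r = 2^\ell,\, p \in r} x'_r$ so that $d^\ell_p = \lfloor s_\ell \rfloor$. The key per-class estimate is that when $s_\ell \geq 1$ we have $d^\ell_p \geq s_\ell - 1 \geq s_\ell/2$, while when $s_\ell < 1$ the contribution of class $\ell$ to the left-hand side of (\ref{eq:med1}) is $2^\ell s_\ell < 2^\ell$. Since only rectangles with $c'_r \leq d'_p$ appear in (\ref{eq:med1}) and $d'_p$ is a power of two, the geometric series of ``wasted'' classes is bounded: $\sum_{\ell : 2^\ell \leq d'_p} 2^\ell < 2 d'_p$.

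Putting these pieces together gives
\[
\sum_\ell 2^\ell d^\ell_p \;\geq\; \tfrac{1}{2}\!\!\sum_{\ell : s_\ell \geq 1} 2^\ell s_\ell \;\geq\; \tfrac{1}{2}\Bigl(\sum_{r : c'_r \leq d'_p,\, p \in r} c'_r x'_r \;-\; 2 d'_p\Bigr) \;\geq\; \tfrac{1}{2}\Bigl(\tfrac{1-4\beta}{4\beta}\, d'_p \;-\; 2 d'_p\Bigr),
\]
where the last step invokes (\ref{eq:med1}). For $\beta$ small enough the right-hand side is at least $d'_p \geq d_p - c(S_p)$, so $p$'s residual demand is met. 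This is precisely the role of the small constant $\beta$ chosen during the LP preprocessing: the $1/\beta$ blow-up in the scaled solution $x'$ must be large enough to absorb both the multiplicative factor of two from rounding $s_\ell$ down and the additive loss of up to $2d'_p$ coming from classes whose fractional weight fails to reach one.

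The main obstacle, and the only nontrivial step, is to argue that the rounding down to integer demands in each $B_\ell$ does not destroy feasibility. This is where the careful split into the $s_\ell \geq 1$ and $s_\ell < 1$ regimes is needed: the former allows a clean $\lfloor s_\ell \rfloor \geq s_\ell/2$ bound, while the latter is controlled globally by the geometric series. Everything else is bookkeeping driven by the definition of light points and the scaling by $1/\beta$.
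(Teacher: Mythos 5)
Your proposal follows essentially the same route as the paper: lower bound $\sum_\ell 2^\ell d^\ell_p = \sum_\ell 2^\ell \lfloor s_\ell\rfloor$ and compare against the light-point inequality (\ref{eq:med1}). However, two points deserve attention. First, the intermediate step ``when $s_\ell \geq 1$ we have $d^\ell_p \geq s_\ell - 1 \geq s_\ell/2$'' is incorrect as stated: $s_\ell - 1 \geq s_\ell/2$ only holds for $s_\ell \geq 2$. The end conclusion $\lfloor s_\ell\rfloor \geq s_\ell/2$ for $s_\ell \geq 1$ is still true, but you should get it by a direct case check ($\lfloor s_\ell\rfloor = 1 \geq s_\ell/2$ for $1 \leq s_\ell < 2$, and $\lfloor s_\ell\rfloor \geq s_\ell - 1 \geq s_\ell/2$ for $s_\ell \geq 2$), not through the chain you wrote. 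Second, the split into $s_\ell \geq 1$ and $s_\ell < 1$ regimes is an unnecessary detour that costs you a constant factor. The paper simply applies $\lfloor s_\ell\rfloor \geq s_\ell - 1$ to every class $\ell < i$ (where $2^i = d'_p$), so the total additive loss is $\sum_{\ell < i} 2^\ell < d'_p$, yielding coverage at least $\frac{1-8\beta}{4\beta} d'_p$, which is $\geq d'_p$ already at $\beta = 1/12$. Your version loses a multiplicative factor of $2$ from the floor bound and an additive $2d'_p$ from the geometric series, giving $\frac{1-12\beta}{8\beta} d'_p$, which requires $\beta \leq 1/20$. Since you only claimed ``for $\beta$ small enough,'' this is not a logical error, but it is incompatible with the paper's chosen $\beta = 1/12$, and the paper's uniform use of $\lfloor s_\ell\rfloor \geq s_\ell - 1$ is both cleaner and tighter.
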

\begin{proof} 
Consider a particular point $p$ and suppose it lies in class $i$ in $\cal I'$, i.e. its demand $d'(p)=2^i$. Then the extent to which $p$ is
covered by $\bigcup_\ell S_\ell$ is at least
{\allowdisplaybreaks
\begin{eqnarray*}
\sum_{\ell < i} 2^\ell d^{\ell}_p  & =  &\sum_{\ell < i} 2^\ell \lfloor \sum_{r : c'(r) = 2^\ell \mbox{ and } p \in r}  x_r' \rfloor \\
&\ge&
\sum_{\ell < i} 2^\ell ( (\sum_{r : c'(r) = 2^\ell \mbox{ and } p \in r}  x_r')-1)  \\
&\ge& \left(\sum_{\ell < i} 2^\ell  \sum_{r : c'(r) = 2^\ell \mbox{ and } p \in r}  x_r'  \right) - 2^i \\
&= & \left( \sum_{\ell < i} 2^\ell  \sum_{r : c'(r) = 2^\ell \mbox{ and } p \in r}  x_r' \right)  - d'(p) \\
&\ge & \left(\frac{1 - 8 \beta}{4\beta} \right)  d'(p) 
\end{eqnarray*}
}
where last inequality follows from (\ref{eq:med1}).
Since $\beta=1/12$, it follows the each $p$ is covered.
\end{proof} 

Henceforth we focus on a particular instance of R2M.
Let $I$ be such an instance with $n$ rectangles (sets) $S_1,\ldots,S_n$ and $m$ points (elements) $1,\ldots,m$. Let $d_i$ denote the covering requirement of $i$. 
We are given  some fractional feasible solution $x$, i.e. for each $i$  $\sum_{j: i \in S_j } x_j \geq d_i$ and $x_j \in [0,1]$ for all $S_j$.
The following lemma is standard.
\begin{lemma}
\label{lowdem}
For any multi-cover problem, 
at the loss of an $O(1)$ factor in approximation ratio, we can assume that the maximum demand $d=\max_i d_i$ is $O(\log m)$. 
\end{lemma}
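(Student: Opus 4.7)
The plan is to split the instance into a \emph{high-demand} part, which is handled directly by a randomized rounding of the fractional solution $x$, and a \emph{low-demand} part, where the demand bound $d = O(\log m)$ is achieved by construction and the $\alpha$-approximation is invoked.

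Fix a constant $\gamma \geq 2$ (to be determined) and a threshold $\tau = c \log m$ for a constant $c$ chosen below. First I would partition the elements into $H = \{i : d_i \geq \tau\}$ and $L = \{i : d_i < \tau\}$, creating two sub-instances on the same set system: one restricted to $H$, one restricted to $L$. Both sub-instances still admit $x$ as a feasible fractional solution with objective value at most $\mathrm{LP}$, since the covering constraints for each group are a subset of the original ones. For $L$, the demand is already $O(\log m)$ by construction, so applying the assumed $\alpha$-approximation (for multi-cover instances with maximum demand $O(\log m)$) yields an integral cover of cost $O(\alpha \cdot \mathrm{LP})$.

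For $H$, I would use straightforward randomized rounding: independently include each set $S_j$ with probability $\min(1, \gamma x_j)$. The expected total cost is $\gamma \sum_j w_j x_j \leq \gamma \cdot \mathrm{LP}$. For an element $i \in H$, let $X_i = \sum_{j: i \in S_j} Y_j$ where $Y_j$ is the indicator of $S_j$ being picked. Feasibility of $x$ gives $\mathbb{E}[X_i] \geq \gamma d_i \geq \gamma \tau$. A standard multiplicative Chernoff bound yields
\begin{equation*}
\Pr[X_i < d_i] = \Pr\!\left[X_i < \tfrac{1}{\gamma}\mathbb{E}[X_i]\right] \leq \exp\!\left(-\Omega(\mathbb{E}[X_i])\right) \leq \exp(-\Omega(\gamma \tau)).
\end{equation*}
Choosing $c$ large enough in $\tau = c \log m$ makes this at most $1/m^2$, so by a union bound over the at most $m$ elements of $H$, with probability at least $1 - 1/m$ all elements of $H$ are covered to their demand. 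By Markov's inequality applied to the cost, both events (covering all of $H$ and paying at most $O(\mathrm{LP})$) hold simultaneously with constant positive probability; repeating the rounding and checking feasibility gives a Las Vegas algorithm that succeeds in expected polynomial time.

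Taking the union of the two integral solutions produces a feasible cover for the original instance at total cost $O(\alpha) \cdot \mathrm{LP}$, which proves the lemma. I do not expect any serious obstacle; the only point requiring care is that $x$ remains a feasible LP solution for each sub-instance on its own, which is immediate because we are only dropping constraints. The rest is the standard Chernoff-plus-union-bound argument that is routine in this context.
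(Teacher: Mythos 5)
Your high-level plan (split high and low demand, handle the high-demand elements by a $\min(1,\gamma x_j)$ rounding with a Chernoff bound, and invoke the approximation guarantee on the low-demand residue) is the same idea as the paper. However, there is a genuine gap in the Chernoff step for $H$: the inequality $\mathbb{E}[X_i]\ge\gamma d_i$ is false in general, because of the $\min(1,\cdot)$ cap. We only have $\mathbb{E}[X_i]=\sum_{j:i\in S_j}\min(1,\gamma x_j)\ge\sum_j x_j\ge d_i$, and this can be tight. Concretely, suppose an element $i\in H$ is covered by $d_i-1$ sets with $x_j=1$ and by three private sets with $x_j=1/3$ (so $\sum x_j=d_i\ge\tau$ and $x$ is feasible). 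With $\gamma=2$, the first $d_i-1$ sets are picked deterministically but contribute no concentration, and $i$ is left uncovered exactly when none of the three private sets is picked, which happens with probability $(1/3)^3=1/27$ — a constant, not $1/m^2$. With $\Theta(m)$ such elements having disjoint private sets, the rounding leaves $\Theta(m)$ elements of $H$ uncovered with overwhelming probability, so both the union bound and the ``repeat until feasible'' Las Vegas step break down (expected $2^{\Omega(m)}$ repetitions).

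The underlying issue is that you partition $H$ and $L$ by the \emph{original} demand and then require the rounding to fully satisfy every element of $H$, leaving no mechanism to absorb the $O(1)$ residual demand that survives when an element's fractional coverage is concentrated on sets with $x_j\ge 1/\gamma$. The paper's proof avoids this by not partitioning at all: it rounds once, and then observes that every \emph{uncovered} element has residual demand $O(\log m)$ (either because its original demand was small, or because the deterministic picks already covered it up to an $O(\log m)$ shortfall, or because the Chernoff event — now applied only to the genuinely random part with expectation $\Omega(\log m)$ — held), and that $x$ restricted to the unpicked sets is still feasible for the residual demands since each $x_j\le 1$. A local fix to your argument is to first deterministically buy all sets with $x_j\ge 1/\gamma$, recompute residual demands, and only then split into $H$ and $L$ by the \emph{residual} demand; now for $i\in H$ the random contribution $Z_i$ has $\mathbb{E}[Z_i]\ge\gamma\cdot(\text{residual demand of }i)\ge\gamma\tau$ and the Chernoff bound goes through. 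Either way, the fix is essentially the paper's argument.
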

\begin{proof}
We pick each set $S_j$ with probability $\min(1,2x_j)$. The expected cost of the sets picked is at most twice the LP cost.
By standard Chernoff bounds, for some large enough constant $c$
each element with demand $d_i \geq c \log m$ is covered 
with probability at least $1-1/m^2$. In the residual instance, each uncovered element has demand $O(\log m)$ 
and as $x_j \leq 1$ for each set, the LP solution restricted to the unpicked sets is a feasible solution to the residual instance.
\end{proof}

The following lemma shows how a rounding procedure for a set cover problem can be used for corresponding  multi-cover problem.
 
\begin{lemma}
\label{setmult}
An LP-based $\alpha$ approximation algorithm for a weighted set cover problem can be used to obtain an $\alpha \log d$ approximation for any multi-cover variant of the problem where $d$ is the maximum demand of any element. 
\end{lemma}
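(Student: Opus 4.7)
The plan is to apply the LP-based set cover $\alpha$-approximation iteratively in $\lceil \log d \rceil$ phases, where phase $t$ halves the maximum residual demand from $D_t = d/2^{t-1}$ down to $D_t/2$. Initializing the residual fractional solution to $y^{(0)} = x$ and the residual demands to $d_i^{(0)} = d_i$, each phase will consist of roughly $D_t/2$ sub-rounds of set cover rounding, so that after the phase every still-active element has had its residual demand reduced by $\Theta(D_t)$.

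Concretely, at sub-round $r$ of phase $t$ I would take the current residual LP (the fractional solution $x$ restricted to sets not yet picked) and scale it by the factor $2/(D_t - r + 1)$, giving a set cover LP $z^{(t,r)}$. An inductive argument, using the original feasibility $\sum_j x_j \geq d_i$ together with $x_j \leq 1$, shows that the residual fractional coverage of each still-active element is at least its current residual demand, so $\sum_j z^{(t,r)}_j \geq 1$ and $z^{(t,r)}$ is a feasible fractional set cover. The LP-based $\alpha$-approximation then returns a set cover of weight at most $\alpha \cdot 2 \,\mathrm{LP}/(D_t - r + 1)$. Summing over $r = 1, \ldots, D_t/2$ using the harmonic estimate $\sum_{r=1}^{D_t/2} 1/(D_t - r + 1) = O(1)$, the total cost per phase is $O(\alpha \cdot \mathrm{LP})$, and summing this over the $\lceil \log d \rceil$ phases yields the claimed $O(\alpha \log d) \cdot \mathrm{LP}$ total weight.

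The main obstacle is the feasibility bookkeeping: I must verify that the residual LP continues to provide enough fractional coverage of every active element throughout all sub-rounds, despite sets being removed from the LP and residual demands decreasing. This holds because removing a picked set of fractional value $x_j \leq 1$ subtracts at most one unit from the fractional coverage of any element it covers, which matches the corresponding reduction of one in that element's residual demand, so the invariant $\sum_j y_j \geq d_i^{\mathrm{res}}$ is preserved. Non-uniformity of demands across elements (the $d_i$'s range in $[1,d]$ rather than all equaling $d$) is handled by first bucketing elements into demand classes $[2^k, 2^{k+1})$ and running the uniform-demand version of the argument on each class separately; this loses only a constant factor and completes the proof.
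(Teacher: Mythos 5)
Your proposal follows the same core approach as the paper --- iterative set-cover rounding with a harmonic scaling of the fractional solution --- but the feasibility step has a genuine gap, and the bucketing you propose does not repair it.

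\textbf{Feasibility of the scaled LP.} At sub-round $r$ of phase $t$ you scale the residual fractional solution $y$ by $2/(D_t - r + 1)$ and claim $z^{(t,r)}$ is a feasible set-cover LP for \emph{every} still-active element. The invariant you correctly maintain is that the residual LP covers each active element $i$ to extent at least its residual demand $d_i^{\mathrm{res}}$. After scaling, $i$ is covered to extent $\geq 2 d_i^{\mathrm{res}}/(D_t - r + 1)$, which is $\geq 1$ only when $d_i^{\mathrm{res}} \geq (D_t - r + 1)/2$. An element with $d_i^{\mathrm{res}} = 1$ while $D_t - r + 1 = 10$ is covered only to extent $1/5$, so the set-cover rounding need not hit it, and your phase cannot guarantee it is covered. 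The paper sidesteps this entirely by defining $P_r$ as the elements whose residual demand is \emph{exactly} the current maximum $d - r + 1$, and applying set-cover rounding only to $P_r$; elements with smaller residual demand are deferred to the later round in which they become maximal, and the same harmonic sum gives the $\alpha \log d$ bound.

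\textbf{The bucketing fix does not lose only a constant.} Partitioning elements into demand classes $[2^k, 2^{k+1})$ and running a uniform-demand version separately on each class is not a constant-factor loss. Each class reuses the same fractional solution $x$, so the cover produced for class $k$ alone costs $O(\alpha\, k) \cdot \mathrm{cost}(x)$, and summing over the $\log d$ classes gives $O\bigl(\alpha (\log d)^2\bigr) \cdot \mathrm{cost}(x)$, not $O(\alpha \log d)$. It also does not remove the non-uniformity: residual demands within a class diverge after a single round (an element that gets hit and one that does not end up with different residuals), so the ``uniform-demand version'' is no longer applicable after round one. The correct and simpler repair is to drop the bucketing and, at each sub-round, target only the elements whose residual demand is maximal; with that change your phased harmonic-sum accounting closes exactly as the paper's does.
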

\begin{proof}
Let $x$ be some feasible fractional solution to the multi-cover problem. Our algorithm proceeds in $d$ rounds, and picking 
some sets in each round such that after $d$ rounds, each $p_i$ is covered by at least $d_i$ distinct sets. 
Inductively, assume
that at beginning of round $r$ each element has an uncovered demand of at most $d-r+1$.
This is clearly true for $r = 1$.
For round $r=1,\ldots,d$, we proceed as follows. 
Consider the LP solution $y^{(r)} =x/(d - r + 1)$, restricted to the sets not chosen thus far in previous rounds.
Let $P_r$ be the elements with (current) demand exactly $d-r+1$.
We claim that $y^{(r)}$ is a feasible fractional set cover solution for $P_r$.
If $i \in P_r$ had requirement $d_i$ initially, then it has been covered $c_i=d_i - (d-r+1))$ times thus far. As each $x_j \leq 1$, the solution $x$ restricted to sets not picked this far still covers $i$ to extent $d_i -c_i$ and hence $y^{(r)}$ must cover $i$ fractionally to extent at least $(d_i-c_i)/(d-r+1) \geq 1$. 

Let $C_r$ denote the cover for $P_r$ obtained by applying our set cover rounding procedure to $y^{(r)}$.
We return the solution $C_1 \cup \ldots \cup C_d$. In this solution, each element $i$ is covered at least $d_i$ times, and its cost is $ \sum_{r=1}^d \alpha \cdot \textrm{cost} (y^{(r)}) \leq \sum_{r=1}^d \alpha \cdot \textrm{cost} (x/(d-r+1)) = \alpha \log d \cdot \textrm {cost} (x)$. 
\end{proof}

We now give a $2$ approximation for R2M using local ratio. We refer the reader to \cite{BBF} for a general description of the technique. While we use local ratio below, our approximation can be easily made LP-based
using the equivalence between local ratio and the primal dual method \cite{BR}.

\begin{lemma}
\label{r2mlemma}
There is a 2-approximation for the R2M problem when all the demands are $O(1)$.
\end{lemma}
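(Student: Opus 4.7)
My plan is to apply the local-ratio framework of \cite{BBF}, exploiting the Davenport--Schinzel order-$2$ structure of rectangles abutting the $y$-axis that is recorded in Lemma \ref{2dunion}. Since the demands are $O(1)$, any absolute constant $\alpha$ we prove for the set-cover-like inner step will yield an $O(1)$-approximation after multiplying by $d$; in particular I aim to show that at every recursive step the ``local'' blow-up is at most $2$ (per unit of demand), which yields the claimed $2$-approximation ratio for the $O(1)$-demand regime.

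The recursion works as follows. Given the residual instance, I pick a critical point $p^*$ of positive residual demand; the natural choice (as suggested by Lemma \ref{2dunion}) is the uncovered point with smallest $y$-coordinate, breaking ties by largest $x$-coordinate. Let $R_{p^*}=\{r : p^*\in r\}$ and $\epsilon=\min_{r\in R_{p^*}} w_r$. I decompose $w = w_1 + w_2$ where $w_1(r)=\epsilon$ if $r\in R_{p^*}$ and $w_1(r)=0$ otherwise. I recurse on the same instance with weight function $w_2$ to obtain a cover $F'$, and then augment $F'$ with zero-$w_2$-weight rectangles from $R_{p^*}$ (these now exist, since $w_2(r^*)=0$ for any $r^*\in\arg\min_{r\in R_{p^*}}w_r$), picking at each augmentation the rectangle of largest $y_r^2$ among those still available, until $p^*$ is covered $d_{p^*}$ times.

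By the standard local-ratio theorem, the algorithm returns a $2$-approximation provided that (a) every feasible cover $F^\star$ satisfies $w_1(F^\star)\ge \epsilon \cdot d_{p^*}$, and (b) the returned cover $F$ satisfies $w_1(F)\le 2\epsilon\cdot d_{p^*}$, i.e.\ $|F\cap R_{p^*}|\le 2 d_{p^*}$. Part (a) is immediate: any feasible cover must contain at least $d_{p^*}$ rectangles from $R_{p^*}$. The augmentation step contributes exactly $d_{p^*}$ such rectangles, so it remains to argue that the inherited cover $F'$ can contain at most $d_{p^*}$ additional rectangles from $R_{p^*}$.

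The main obstacle is establishing this bound on $|F'\cap R_{p^*}|$, and it is here that the rectangle structure enters. The key claim is that because $p^*$ is chosen with the smallest $y$-coordinate, every rectangle from $R_{p^*}$ that later recursive calls might pick is selected to cover some point $q$ lying above $p^*$ (i.e.\ $y_q>y_{p^*}$) with $x_q<x_r$; such a rectangle necessarily has $y_r^1<y_{p^*}<y_q<y_r^2$. Combining this with the specific ``largest top'' augmentation rule and an inductive invariant that at most $d_{p^*}$ rectangles of $R_{p^*}$ can simultaneously appear on the vertical portion of the boundary of $\bigcup F$ (a direct consequence of the no-$abab$ property of Lemma \ref{2dunion}), one charges each such rectangle to a distinct unit of residual demand of $p^*$ before augmentation. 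This caps $|F'\cap R_{p^*}|\le d_{p^*}$, giving $|F\cap R_{p^*}|\le 2 d_{p^*}$ and completing the local-ratio argument. Combining this with a constant bound $d=O(1)$ finishes the proof; one can also, as a sanity check, first reduce to the unit-demand case via Lemma \ref{setmult} and then run the above for $d=1$ to lose only a constant.
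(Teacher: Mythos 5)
Your proposal diverges from the paper's proof in two ways that both create genuine gaps. First, you select $p^*$ as the uncovered point of smallest $y$-coordinate, whereas the paper selects the \emph{rightmost} uncovered point (maximum $x$-coordinate). Because every rectangle has the form $[0,x_r]\times[y_r^1,y_r^2]$ (abutting the $y$-axis), choosing the rightmost $p$ guarantees that every remaining uncovered point $q$ satisfies $x_q\le x_p\le x_r$ for any $r\ni p$; this is exactly what makes a rectangle $R_3\ni p$ redundant once the rectangle with topmost edge and the one with bottommost edge among $\{r : p\in r\}$ are retained (if $y_q\ge y_p$ then $q$ lies in the topmost one; if $y_q\le y_p$ then $q$ lies in the bottommost one). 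With ``smallest $y$'' there is no analogous sandwich: an uncovered point $q$ with $y_q>y_{p^*}$ lying inside $R_3$ need not lie in the rectangle of $R_{p^*}$ with the largest $x_r$, since its $y$-range may exclude $y_q$. So the redundancy argument does not go through with your choice of $p^*$.

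Second, and more fundamentally, your algorithm has no greedy-delete step, and the local-ratio bound hinges on it. The paper recurses, adds the zero-weight rectangles $X$, and then performs a delete pass that removes rectangles while the solution stays feasible; the ``at most two $w_1$-positive rectangles survive'' claim is a statement about the solution \emph{after} this pass. Your step (b), the claim $|F'\cap R_{p^*}|\le d_{p^*}$, is asserted to follow from the Davenport--Schinzel (no-$abab$) property, but that property only bounds the complexity of the \emph{boundary} of the union; it says nothing about how many rectangles of $R_{p^*}$ a recursive cover $F'$ may use. Indeed, $F'$ is a cover of the full instance under $w_2$: if many residual points have low $w_2$-cost covers that happen to also pass through $p^*$, then $F'$ can contain arbitrarily many members of $R_{p^*}$ (for instance, a stack of points above $p^*$, each requiring its own rectangle, all of whose rectangles also contain $p^*$). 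The ``inductive invariant'' is never stated precisely or proved, and I do not believe it is true in the form you use it.

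The correct structure, following the paper, is: choose the rightmost uncovered point $p$; set $w_1=z$ on rectangles through $p$ and recurse on the residual instance with the covered points removed; add the zero-weight rectangles; then greedy-delete. After deletion, only the topmost-edge and bottommost-edge survivors through $p$ can remain (for unit demand), giving the factor $2$. For $d=O(1)$ one either observes that the same argument caps the number of survivors at $2d$ (keep the top-$d$ and bottom-$d$), or simply invokes Lemma \ref{setmult} to reduce to $d=1$ at an $O(\log d)=O(1)$ loss, as you note in your last sentence; but the crux you are missing is the rightmost-point selection together with greedy-delete.
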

\begin{proof}
The algorithm is a straight-forward
application of local ratio rule. We adopt the notation from all the local ratio rule papers. Let $w$ be
the original weight function. Consider the rightmost point $p$ to be covered, that is the point $p$ with
maximum $x$ coordinate (if there are several, pick one arbitrarily). 
Let $z$ be the minimum weight of a rectangle
covering $p$. Define the weight function $w_1 = z$ for rectangles that cover $p$, and 0 for the other rectangles. Let $w_2 = w - w_1$ be the residual weight function.
Recall that the local ratio rule tentatively picks all the sets $X$ with $w_2$ weight 0, removes the covered points and proceeds recursively on the residual instance with function $w_2$.
Let $S_2$ be the solution obtained recursively by the local ratio for the residual instance.
We then add all the rectangles in $X$ and perform the greedy-delete step, i.e. remove them arbitrarily as
long as solution is feasible.

As $p$ must be covered, any optimum solution must incur a $w_1$ cost of $z$.
It suffices to show that at most two rectangles with non-zero $w_1$ weight can be picked by the algorithm.
Suppose more than two are left after the delete step. But as $p$ is the rightmost point, any rectangle that covers $p$ and is different from the one with the topmost edge or the one with the  bottommost edge will be redundant.
\end{proof}

\section*{ Acknowledgments} We greatly thank Alexander Souza, Cliff Stein, Lap-Kei Lee, Ho-Leung Chan,
and Pan Jiangwei for extensive discussions about this research.

\bibliographystyle{plain}
\bibliography{focs-appear}

\end{document}